\theoremstyle{plain}
\numberwithin{equation}{section}
\newtheorem{thm}{Theorem}[section]
\newtheorem{lem}[thm]{Lemma}
\newenvironment{exam}[1]
{\begin{flushleft}\textbf{Example #1}.\enspace}%
{\end{flushleft}}
\newcommand{\real}{{\mathbb R}}
\newcommand{\trace}{tr}
\newcommand{\rmcor}{\mathrm{Cor}}
\newcommand{\rmcomm}{\mathrm{Comm\,}}
\newcommand{\instr}{In}
\newcommand{\ob}{Ob}
\newcommand{\ityes}{\textit{yes}}
\newcommand{\itno}{\textit{no}}
\newcommand{\rmre}{\mathrm{Re}}
\newcommand{\escript}{\mathcal{E}}
\newcommand{\iscript}{\mathcal{I}}
\newcommand{\jscript}{\mathcal{J}}
\newcommand{\lscript}{\mathcal{L}}
\newcommand{\mscript}{\mathcal{M}}
\newcommand{\oscript}{\mathcal{O}}
\newcommand{\sscript}{\mathcal{S}}
\newcommand{\alphahat}{\widehat{\alpha}}
\newcommand{\iscripthat}{\widehat{\iscript}}
\newcommand{\jscripthat}{\widehat{\jscript}}
\newcommand{\atilde}{\widetilde{A}}
\newcommand{\ptilde}{\widetilde{P}}
\newcommand{\iscripttilde}{\widetilde{\iscript}}
\newcommand{\mscripttilde}{\widetilde{\mscript}}
\newcommand{\iscriptbar}{\overline{\iscript}}
\newcommand{\alphabar}{\overline{\alpha}}
\newcommand{\ab}[1]{\left|#1\right|}
\newcommand{\doubleab}[1]{\left|\left|#1\right|\right|}
\newcommand{\brac}[1]{\left\{#1\right\}}
\newcommand{\paren}[1]{\left(#1\right)}
\newcommand{\sqbrac}[1]{\left[#1\right]}
\begin{document}

\title{MEASUREMENT MODELS\\WITH SEPARABLE\\INTERACTION CHANNELS}
\author{Stan Gudder\\ Department of Mathematics\\
University of Denver\\ Denver, Colorado 80208\\
sgudder@du.edu}
\date{}
\maketitle

\begin{abstract}
Measurement models (MMs) stand at the highest structural level of quantum measurement theory. MMs can be employed to construct instruments which stand at the next level. An instrument is thought of as an apparatus that is used to measure observables and update states. Observables, which are still at the next level, are used to determine probabilities of quantum events. The main ingredient of an MM is an interaction channel $\nu$ between the system being measured and a probe system. For a general $\nu$, the measured observable $A$ need not have an explicit useful form. In this work we introduce a condition for $\nu$ called separability and in this case $A$ has an explicit form. Under the assumption that $\nu$ is separable, we study product MMs and conditioned MMs. We also consider the statistics of MMs and their uncertainty principle. Various concepts are illustrated using examples of L\"uders and Holevo instruments.
\end{abstract}

\section{Introduction}  
For simplicity, we shall assume that all our Hilbert spaces are finite dimensional. Although this is a strong restriction, these spaces are general enough to include quantum computation and information theory \cite{hz12,nc00}. Moreover, many of our results can be generalized to the infinite dimensional case. We begin with a general discussion of the material and leave many detailed definitions and results for later sections. If $H$ is a finite dimensional complex Hilbert spaces, we denote the set of (bounded) linear operators on $H$ by $\lscript (H)$ and the set of self-adjoint operators on $H$ by $\lscript _S(H)$. A positive operator $\rho\in\lscript _S(H)$ with trace $\trace (\rho )=1$ is called a \textit{state}. States are employed to describe the condition of a quantum system and the set of states is denoted by $\sscript (H)$. In particular, if $\rho\in\sscript (H)$, then $\rho$ can be used to compute the statistics of the system. As an example, an operator $a\in\lscript _S(H)$ is called an
\textit{effect} if $0\le a\le I$ where $0$ and $I$ are the zero and identity operators of $H$, respectively \cite{bgl95,blm96,blpy16,hz12}. Effects correspond to two valued\ \ityes-\itno\ (true-false) experiments and the set of effects is denoted by $\escript (H)$. If $a\in\escript (H)$ and $\rho\in\sscript (H)$, then the \textit{probability} that $a$ has the value \ityes\  ($a$ is true) when the system is in the state $\rho$ is $\trace (\rho a)$. Since $0\le a\le I$, it follows that $0\le\trace (\rho a)\le 1$. Moreover, the complementary effect $a'=I-a$ is true if and only if $a$ is false and since $a+a'=I$ we have $\trace (\rho a')=1-\trace (\rho a)$.

One of the most important problems in quantum mechanics is to determine the state of a quantum system by performing measurements on the system and this is where measurement models (MMs) come in. Suppose the system is in the unknown state $\rho\in\sscript (H)$ and we wish to gain information about $\rho$. To accomplish this, we first interact $H$ with an auxiliary system whose Hilbert space is $K$. The composite of the two systems is given by the combined tensor product system $H\otimes K$ and the interaction is described by a quantum channel $\nu\colon\sscript (H)\to\sscript (H\otimes K)$ \cite{bgl95,blm96,blpy16,oza84}. A measurement of a probe observable $P$ on $K$ is performed and the result gives information about $\rho$. To be precise, as we shall see in Section~2, an apparatus called an instrument is applied and this measures an observable $A$ whose probability distribution describes $\rho$. The MM for this process is given by the triple
$\mscript =(K,\nu ,P)$.

Recently MMs have been extensively studied in the literature \cite{bgl95,blm96,blpy16,hz12,lp22,oza84}. Unfortunately, general interaction channels can be quite complicated and difficult to analyze. In particular the measured observable $A$ need not have an explicit useful form. In this work we introduce a condition on $\nu$ called separability and in this case $A$ has an explicit form. In later sections we study conditioned MMs and sequential product MMs under the assumption that $\nu$ is separable. Moreover, we consider the statistics of real-valued MMs and their uncertainty principle. In particular, we define the expectation and variance of a real-valued MM. We also define the correlation and covariance between two MMs. For this work, an important role is played by the stochastic operator of a MM. Various concepts are illustrated using examples of L\"uders and Holevo instruments.

\section{Basic Definitions}  
A (finite) \textit{observable} on a Hilbert space $H$ is a finite set $A=\brac{A_x\colon x\in\Omega _A}$ of effects $A_x\in\escript (H)$ that satisfy
$\sum\limits _{x\in\Omega _A}A_x=I$ \cite{bgl95,blm96,blpy16,hz12,lp22,oza84}. We call $\Omega _A$ the \textit{outcome set} of $A$ and the elements
$x\in\Omega _A$ the \textit{outcomes} of $A$. When $A$ is measured and the resulting outcome is $x$, we say that $A_x$ \textit{occurs}. The set of observables on $H$ is denoted $\ob (H)$. If $\rho\in\sscript (H)$, $A\in\ob (H)$, the $\rho$-\textit{probability distribution} of $A$ is $\Phi _\rho ^A(x)=\trace (\rho A_x)$. Notice that
$\Phi _\rho ^A$ is indeed a probability measure because $0\le\trace (\rho A_x)\le 1$ for all $x\in\Omega _A$ and 
\begin{equation*}
\sum _{x\in\Omega _A}\Phi _\rho ^A(x)=\sum _{x\in\Omega _a}\trace (\rho A_x)=\trace\sqbrac{\rho\sum _{x\in\Omega _A}A_x}
   =\trace (\rho I)=\trace (\rho )=1
\end{equation*}
We conclude that an observable is an effect-valued measure and if $\Delta\subseteq\Omega _A$ we call $\Delta$ an \textit{event} and write
$A(\Delta )=\sum _{x\in\Delta}A_x$. Observables are also called \textit{positive operator-valued measures} (POVM) \cite{blpy16,hz12,nc00}.

An \textit{operation} from $H$ to another Hilbert space $K$ is a completely positive map \cite{bgl95,blm96,blpy16,hz12,lp22} $\iscript\colon\lscript (H)\to\lscript (K)$ such that $\trace\sqbrac{\iscript (B)}\le\trace (B)$ for all $B\in\lscript (H)$. We denote the set of operations from $H$ to $K$ by $\oscript (H,K)$ and write
$\oscript (H)=\oscript (H,H)$. If $\iscript\in\oscript (H,K)$, there exists a finite set of \textit{Kraus operators} $C_i\colon H\to K$ such that
$\sum\limits _{i=1}^nC_i^*C_i\le I$ and $\iscript (B)=\sum\limits _{i=1}^nC_iBC_i^*$ for all $B\in\lscript (H)$ \cite{hz12,lp22,nc00}. If $\iscript\in\oscript (H,K)$ there exists an unique linear map $\iscript ^*\colon\lscript (K)\to\lscript (H)$ \cite{gud120,gud220} such that
$\trace\sqbrac{\rho\iscript ^*(B)}=\trace\sqbrac{\iscript (\rho )B}$. In fact, if $C_i\colon H\to K$ are Kraus operators for $\iscript$, then
$\iscript ^*(B)=\sum\limits _{i=1}^nC_i^*BC_i$. We call $\iscript ^*$ the \textit{dual} of $\iscript$. Notice that $\iscript ^*\colon\escript (K)\to\escript (H)$. An operation
$\iscript$ that satisfies $\trace\sqbrac{\iscript (B)}=\trace (B)$ for all $B\in\lscript (H)$ is called a \textit{channel}. If $\iscript$ is an operation with Kraus operators
$C_i\colon H\to K$, it is easy to check that $\iscript$ is a channel if and only if $\sum\limits _{i=1}^nC_i^*C_i=I$. It follows that $\iscript$ is a channel if and only if
$\iscript ^*(I)=I$.

An \textit{instrument} is a finite set of operations $\iscript =\brac{\iscript _x\colon x\in\Omega _\iscript}\subseteq\oscript (H)$ where
$\iscriptbar =\sum\limits _{x\in\Omega _\iscript}\iscript _x$ is a channel. We call $\Omega _\iscript$ the \textit{outcome set} of $\iscript$ and denote the set of instruments on $H$ by $\instr (H)$.  If $\rho\in\sscript (H)$, $\iscript\in\instr (H)$, the $\rho$-probability distribution of $\iscript$ is
$\Phi _\rho ^\iscript (x)=\trace\sqbrac{\iscript _x(\rho )}$. As with observables, $\Phi _\rho ^\iscript$ is a probability measure because
$0\le\trace\sqbrac{\iscript _x(\rho )}\le 1$ for all $x\in\Omega _\iscript$ and 
\begin{equation*}
\sum _{x\in\Omega _\iscript}\Phi _\rho ^\iscript (x)=\sum _{x\in\Omega _\iscript}\trace\sqbrac{\iscript _x(\rho )}=\trace\sqbrac{\iscriptbar (\rho )}=\trace (\rho )=1
\end{equation*}
An instrument is thought of as an apparatus that performs a measurement and updates the state of the system depending on the result. If the apparatus for $\iscript$ is executed and the result $x\in\Omega _\iscript$ occurs, then the \textit{updated} state is $\iscript _x(\rho )/\doubleab{\iscript _x(\rho )}$ whenever
$\iscript _x(\rho )\ne 0$. As with observables, an instrument is an operation-valued measure and for $\Delta\subseteq\Omega _\iscript$ we write
$\iscript (\Delta )=\sum\limits _{x\in\Delta}\iscript _x$.

A \textit{stochastic kernel} (or \textit{stochastic matrix}) is a collection $\brac{\lambda _{yx}\colon x\in\Delta ,y\in\Gamma}$ where $\Delta ,\Gamma$ are finite sets,
$0\le\lambda _{yx}\le 1$ and $\sum\limits _{x\in\Gamma}\lambda _{yx}=1$, for $y\in\Gamma$. If $\lambda _{yx}$ is a stochastic kernel and $\iscript\in\instr (H)$, then
$\jscript =\brac{\jscript _x\colon x\in\Delta}$ defined by $\jscript _x=\sum _{y\in\Gamma}\lambda _{yx}\iscript _y$ is an instrument called a \textit{post-processing} of
$\iscript$. Similarly, if $A$ is an observable, then $B=\brac{B_x\colon x\in\Delta}$ given by $B_x=\sum\limits _{y\in\Gamma}\lambda _{yx}A_y$ is an observable called a \textit{post-processing} of $A$ \cite{blpy16,hz12}. If $\iscript\in\instr (H)$, then there exists an unique observable $\iscripthat\in\ob (H)$ such that
$\trace (\rho\iscripthat _x)=\trace\sqbrac{\iscript _x(\rho )}$ for all $x\in\Omega _{\iscripthat}=\Omega _\iscript$, $\rho\in\sscript (H)$. We call $\iscripthat$ the observable \textit{measured} by $\iscript$ \cite{gud120,gud220}. Notice that $\iscript$ and $\iscripthat$ have the same $\rho$-probability distribution for all
$\rho\in\sscript (H)$. However, $\iscript$ gives more information than $\iscripthat$ because $\iscript$ also determines the updated state. Although an instrument measures an unique observable as we shall see, an observable is measured by many instruments. Moreover, if
$\jscript _x=\sum\limits _{y\in\Gamma}\lambda _{yx}\iscript _y$ is a post-processing of $\iscript$, and it is easily shown that
$\iscripthat _x=\sum\limits _{y\in\Gamma}\lambda _{yx}\iscripthat _x$, it follows that $\jscripthat$ is a post-processing of $\iscripthat$. Since
$\trace (\rho\iscripthat _x)=\trace\sqbrac{\iscript _x(\rho )I}=\trace\sqbrac{\rho\iscript _x^*(I)}$, we see that $\iscripthat _x=\iscript _x^*(I)$.

A \textit{measurement model} (MM) for a Hilbert space $H$ is a triple $\mscript =(K,\nu ,P)$ where $K$ is an auxiliary Hilbert space,
$\nu\colon\sscript (H)\to\sscript (H\otimes K)$ is a (interaction) channel and $P$ is a (probe) observable on $K$. This is slightly different but equivalent to the usual definition \cite{bgl95,blm96,blpy16,hz12,oza84}. The instrument \textit{measured} by $\mscript$ is defined by 
\begin{equation*}
\iscript _x^\mscript (\rho )=\trace _K\sqbrac{\nu (\rho )I\otimes P_x}
\end{equation*}
for all $x\in\Omega _{\iscript ^\mscript}=\Omega _P$, $\rho\in\sscript (H)$ where $\trace _K$ is the partial trace relative to $K$. The channel for $\iscript ^\mscript$ is
\begin{equation*}
\iscriptbar ^{\,\mscript}(\rho )=\sum _x\iscript _x^\mscript (\rho )=\trace _K\sqbrac{\nu (\rho )}
\end{equation*}
In general, there is no explicit expression for the observable $(\iscript ^\mscript )^\wedge$ measured by $\iscript ^\mscript$. All we can do is write
\begin{equation*}
\trace\sqbrac{\rho (\iscript ^\mscript )_x^\wedge}=\trace\sqbrac{\iscript _x^\mscript (\rho )}=\trace\sqbrac{\nu (\rho )I\otimes P_x}
   =\trace\sqbrac{\rho\nu ^*(I\otimes P_x)}
\end{equation*}
from which we have $(\iscript ^\mscript )_x^\wedge =\nu ^*(I\otimes P_x)$ but this gives no useful information, in general. This problem is overcome in the case where $\nu$ is \textit{separable} which means that for all $\rho\in\sscript (H)$
\begin{equation}                
\label{eq21}
\nu (\rho )=\sum _y\alpha _y(\rho )\otimes\gamma _y
\end{equation}
where $\gamma _y\in\sscript (K)$ and $\alpha =\brac{\alpha _y\colon y\in\Omega _\alpha}\in\instr (H)$.

\begin{lem}    
\label{lem21}
If $\nu =\sum\limits _y\alpha _y(\rho )\otimes\gamma _y$is separable, then
\begin{equation}                
\label{eq22}
\iscript _x^\mscript =\sum _y\trace (\gamma _yP_x)\alpha _y
\end{equation}
is a post-processing of $\alpha$ and
\begin{equation}                
\label{eq23}
(\iscript ^\mscript )_x^\wedge =\sum _y\trace (\gamma _yP_x)\alphahat _y
\end{equation}
is a post-processing of $\alphahat$.
\end{lem}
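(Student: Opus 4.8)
The plan is to compute $\iscript _x^\mscript (\rho )$ directly from its definition $\iscript _x^\mscript (\rho )=\trace _K\sqbrac{\nu (\rho )\,I\otimes P_x}$ by substituting the separable form \eqref{eq21}. First I would expand the product inside the partial trace, multiplying the two tensor factors componentwise:
\begin{equation*}
\paren{\sum _y\alpha _y(\rho )\otimes\gamma _y}(I\otimes P_x)=\sum _y\alpha _y(\rho )\otimes\gamma _yP_x .
\end{equation*}
Then, using linearity of $\trace _K$ together with the identity $\trace _K(A\otimes B)=A\,\trace (B)$, I obtain $\iscript _x^\mscript (\rho )=\sum _y\trace (\gamma _yP_x)\,\alpha _y(\rho )$, which is precisely \eqref{eq22} with coefficients $\lambda _{yx}=\trace (\gamma _yP_x)$.

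Next I would verify that $\lambda _{yx}=\trace (\gamma _yP_x)$, indexed by $y\in\Omega _\alpha$ and $x\in\Omega _P$, is a stochastic kernel. Since each $\gamma _y\in\sscript (K)$ is a state and each $P_x$ satisfies $0\le P_x\le I$, the bound $0\le\trace (\gamma _yP_x)\le\trace (\gamma _y)=1$ holds. For the normalization, for each fixed $y$ I would use that $P$ is an observable, so $\sum _xP_x=I$, giving $\sum _x\lambda _{yx}=\trace\sqbrac{\gamma _y\sum _xP_x}=\trace (\gamma _y)=1$. With $\lambda$ a legitimate stochastic kernel and $\alpha\in\instr (H)$, equation \eqref{eq22} displays $\iscript _x^\mscript$ in exactly the form $\sum _y\lambda _{yx}\alpha _y$ of a post-processing, establishing the first claim.

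For the second claim I would pass from \eqref{eq22} to \eqref{eq23} using $\iscripthat _x=\iscript _x^*(I)$ together with the linearity of the assignment $\iscript\mapsto\iscript ^*$, which follows from the defining relation $\trace\sqbrac{\iscript (\rho )B}=\trace\sqbrac{\rho\iscript ^*(B)}$ and the uniqueness of the dual. Applying $\ast$ to \eqref{eq22} and evaluating at $I$ yields
\begin{equation*}
(\iscript ^\mscript )_x^\wedge =\sum _y\trace (\gamma _yP_x)\,\alpha _y^*(I)=\sum _y\trace (\gamma _yP_x)\,\alphahat _y ,
\end{equation*}
using $\alphahat _y=\alpha _y^*(I)$. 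Since this expresses $(\iscript ^\mscript )_x^\wedge$ through the same stochastic kernel $\lambda _{yx}$ already verified above, it is a post-processing of $\alphahat$, and no further checking is needed.

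The argument is essentially mechanical, and the only place that demands care is the partial-trace step, where one must correctly multiply the two tensor factors before tracing out $K$ and then apply $\trace _K(A\otimes B)=A\,\trace (B)$ term by term. Once \eqref{eq22} is established, the stochastic-kernel verification and the passage to \eqref{eq23} are routine.
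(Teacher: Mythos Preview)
Your proof is correct and follows essentially the same route as the paper: substitute the separable form into the definition of $\iscript _x^\mscript$, apply $\trace _K(A\otimes B)=A\,\trace (B)$ term by term to obtain \eqref{eq22}, and then pass to \eqref{eq23} via $(\iscript ^\mscript )_x^\wedge=(\iscript _x^\mscript)^*(I)$ and linearity of the dual. Your explicit verification that $\lambda _{yx}=\trace (\gamma _yP_x)$ is a stochastic kernel is a small addition the paper's own proof omits, even though the lemma statement asserts the post-processing claim.
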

\begin{proof}
For every $\rho\in\sscript (H)$ we obtain
\begin{align*}
\iscript _x^\mscript (\rho )&=\trace\sqbrac{\sum _y\alpha _y(\rho )\otimes\gamma _yI\otimes P_x}=\trace _K\sqbrac{\sum _y\alpha _y(\rho )\otimes\gamma _yP_x}\\
   &=\sum _y\trace _K\sqbrac{\alpha _y(\rho )\otimes\gamma _yP_x}=\sum _y\trace (\gamma _yP_x)\alpha _y(\rho )
\end{align*}
from which \eqref{eq22} follows. To find $(\iscript ^\mscript )_x^\wedge$ we have from \eqref{eq22} that
\begin{align*}
(\iscript ^\mscript )_x^\wedge&=(\iscript _x^\mscript )^*(I)=\sqbrac{\sum _y\trace (\gamma _yP_x)\alpha _y}^*(I)=\sum _y\trace (\gamma _yP_x)\alpha _y^*(I)\\
   &=\sum _y\trace (\gamma _yP_x)\alphahat _y\qedhere
\end{align*}
\end{proof}

\begin{exam}{1}  
A \textit{Kraus instrument} has the form $\alpha _y(\rho )=C_y\rho C_y^*$ where $C_y\in\lscript (H)$ with $\sum\limits _yC_y^*C_y=I$. We then have that
$\alpha _y^*(B )=C_y^*BC_y$ for all $B\in\lscript (H)$ and $\alphahat _y=\alpha _y^*(I)=C_y^*C_y$. If $\alpha$ is a Kraus instrument as just described and $\nu$ is separable with form \eqref{eq21}, then $\nu (\rho )=\sum\limits _yC_y\rho C_y^*\otimes\gamma _y$ and by \eqref{eq22} and \eqref{eq23} we obtain
\begin{equation*}
\iscript _x^\mscript (\rho )=\sum _y\trace (\gamma _yP_x)C_y\rho C_y^*,\quad (\iscript ^\mscript )_x^\wedge =\sum _y\trace (\gamma _yP_x)C_y^*C_y
\end{equation*}
A special case of a Kraus instrument is a L\"uders instrument which has the form $\alpha _y(\rho )=A_y^{1/2}\rho A_y^{1/2}$ where $A_y\in\escript (H)$ and
$\sum _yA_y=I$. Thus, $A=\brac{A_y\colon y\in\Omega _A}$ is an observable, $\alpha _y^*(B)=A_y^{1/2}BA_y^{1/2}$ and $\alphahat =A$ \cite{gud120,gud220}. We then obtain 
\begin{equation*}
\iscript _x^\mscript (\rho )=\sum _y\trace (\gamma _yP_x)A_y^{1/2}\rho A_y^{1/2},\quad
   (\iscript ^\mscript )_x^\wedge =\sum _y\trace (\gamma _yP_x)A_y\hskip 3pc\square
\end{equation*}
\end{exam}

\begin{exam}{2}  
If $A=\brac{A_y\colon y\in\Omega _A}\in\oscript (H)$, $\beta _y\in\sscript (H)$, we define the \textit{Holevo instrument} $\alpha\in\instr (H)$ by
$\alpha _y(\rho )=\trace (\rho A_y)\beta _y$ \cite{gud120,gud220}. Since
\begin{equation*}
\trace\sqbrac{\rho\alpha _y^*(B)}=\trace\sqbrac{\alpha _y(\rho )B}=\trace (\rho A_y)\trace (\beta _yB)=\trace\sqbrac{\rho\trace (\beta _yB)A_y}
\end{equation*}
we have $\alpha _y^*(B)=\trace (\beta _yB)A_y$ and $\alphahat _y=A_y$. This shows that an observable can be measured by many instruments. If $\nu$ is separable with the form \eqref{eq21}, then \eqref{eq22} and \eqref{eq23} give
\begin{equation*}
\iscript _x^\mscript (\rho )=\sum _y\trace (\gamma _yP_x)\trace (\rho A_y)\beta _y,\quad (\iscript ^\mscript )_x^\wedge =\sum _y\trace (\gamma _yP_x)A_y
\end{equation*}
Notice that this observable is the same as the observable of Example~1. As a special case of a Holevo instrument, let $A_y=\lambda _yI$ where
$0\le\lambda _y\in 1$, $\sum\limits _y\lambda _y=1$. We call $A=\brac{A_y\colon y\in\Omega _A}$ an \textit{identity observable}. For $\beta\in\sscript (H)$, the Holevo instrument
\begin{equation*}
\alpha _y(\rho )=\trace (\rho A_y)\beta =\lambda _y\beta
\end{equation*}
is called a \textit{trivial instrument} \cite{gud120,gud220}. We then have $\alpha _y^*(B)=\lambda _y\trace (\beta B)I$ and $\alphahat _y=\lambda _yI$. If $\nu$ is separable with form \eqref{eq21}, then 
\begin{equation*}
\nu (\rho )=\sum _y\lambda _y\beta\otimes\gamma _y=\beta\otimes\gamma
\end{equation*}
where $\gamma =\sum\limits _y\lambda _y\gamma _y\in\sscript (K)$ and by \eqref{eq22}
\begin{equation*}
\iscript _x^\mscript (\rho )=\sum _y\lambda _y\trace (\gamma _yP_x)\beta =\trace (\gamma P_x)\beta
\end{equation*}
which is a trivial instrument. Moreover, \eqref{eq23} gives
\begin{equation*}
(\iscript ^\mscript )_x^\wedge =\sum _y\trace (\gamma _yP_x)\lambda _yI=\trace (\gamma P_x)I
\end{equation*}
which is an identity observable.\hfill\qedsymbol
\end{exam}

\section{Product and Conditioned Measurement Models}  
If $\iscript ,\jscript\in\instr (H)$, we define the \textit{sequential product} $\iscript\circ\jscript$ with outcome space $\Omega _\iscript\times\Omega _\jscript$ given by
$(\iscript\circ\jscript )_{(x,y)}(\rho )=\jscript _y\sqbrac{\iscript _x(\rho )}$. We interpret $\iscript\circ\jscript$ as the instrument obtained by first measuring $\iscript$ and then measuring $\jscript$. The same definition applies to dual instruments and it can be shown \cite{gud120,gud220} that 
$(\iscript\circ\jscript )^*=\jscript ^*\circ\iscript^*$. It follows that
\begin{equation}                
\label{eq31}
(\iscript\circ\jscript )_{(x,y)}^\wedge =(\iscript\circ\jscript )_{(x,y)}^*(I)=(\jscript _y^*\circ\iscript _x^*)(I)=\iscript _x^*\paren{\jscript _y^*(I)}=\iscript _x^*(\,\jscripthat _y)
\end{equation}
Letting $\mscript =(K,\nu ,P)$, $\mscript '=\brac{K',\nu ',P'}$ be MMs for $H$, we have for all $\rho\in\sscript (H)$ that 
\begin{align}
\label{eq32}
(\iscript ^\mscript\circ\iscript ^{\mscript'})_{(x,y)}(\rho )&=\iscript _y^{\mscript'}\sqbrac{\iscript _x^\mscript (\rho )}
   =\iscript _y^{\mscript '}\brac{\trace _K\sqbrac{\nu (\rho )I\otimes P_x}}\notag\\
   &=\trace _{K'}\brac{\nu '\sqbrac{\trace _K\paren{\nu (\rho )I\otimes P_x}}I\otimes P'_y}
\end{align}

\begin{thm}    
\label{thm31}
If $\nu (\rho )=\sum\limits _y\alpha _y(\rho )\otimes\gamma _y$, $\nu '(\rho)=\sum\limits _{y'}\alpha '_{y'}(\rho )\otimes\gamma '_{y'}$ are separable, then
\begin{align}
\label{eq33}
(\iscript ^\mscript\circ\iscript ^{\mscript '})_{(x,y)}&=\sum _{x',y'}\trace (\gamma _{x'}P_x)\trace (\gamma '_{y'}P'_y)(\alpha\circ\alpha ')_{(x',y')}
\intertext{and}
\label{eq34}
(\iscript ^\mscript\circ\iscript ^{\mscript '})_{(x,y)}^\wedge&=\sum _{x',y'}\trace (\gamma _{x'}P_x)\trace (\gamma '_{y'}P '_y)\alpha _{x'}^*\sqbrac{(\alpha ')_{y'}^\wedge}
\end{align}
\end{thm}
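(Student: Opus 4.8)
The plan is to reduce both claims to Lemma~\ref{lem21} together with the adjoint identity \eqref{eq31}, after which everything follows from the linearity of operations and of the dual map. First I would record the two post-processing representations supplied by \eqref{eq22}, relabelling the internal summation indices to avoid collision with the probe outcomes $x,y$: namely $\iscript_x^\mscript=\sum_{x'}\trace(\gamma_{x'}P_x)\alpha_{x'}$ and $\iscript_y^{\mscript'}=\sum_{y'}\trace(\gamma'_{y'}P'_y)\alpha'_{y'}$. The corresponding measured observables are given by \eqref{eq23} in the same form with $\alpha_{x'}$ replaced by $\alphahat_{x'}$ and $\alpha'_{y'}$ by $\alphahat'_{y'}$.

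For \eqref{eq33} I would start from the definition of the sequential product, $(\iscript^\mscript\circ\iscript^{\mscript'})_{(x,y)}(\rho)=\iscript_y^{\mscript'}\sqbrac{\iscript_x^\mscript(\rho)}$, and substitute the two representations above. Because each $\alpha'_{y'}$ is linear, the scalar $\trace(\gamma_{x'}P_x)$ and the inner sum over $x'$ pass through it, leaving a double sum whose generic term is $\trace(\gamma_{x'}P_x)\trace(\gamma'_{y'}P'_y)\,\alpha'_{y'}\sqbrac{\alpha_{x'}(\rho)}$. Recognising $\alpha'_{y'}\sqbrac{\alpha_{x'}(\rho)}$ as $(\alpha\circ\alpha')_{(x',y')}(\rho)$ by the definition of the sequential product on instruments then yields \eqref{eq33}.

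For \eqref{eq34} I would invoke \eqref{eq31}, which expresses the measured observable of a sequential product as $(\iscript^\mscript\circ\iscript^{\mscript'})_{(x,y)}^\wedge=(\iscript_x^\mscript)^*\sqbrac{(\iscript^{\mscript'})_y^\wedge}$. Dualising the first representation gives $(\iscript_x^\mscript)^*=\sum_{x'}\trace(\gamma_{x'}P_x)\alpha_{x'}^*$, since the dual map is linear and commutes with scalar multiplication, and $(\iscript^{\mscript'})_y^\wedge=\sum_{y'}\trace(\gamma'_{y'}P'_y)\alphahat'_{y'}$ is exactly \eqref{eq23}. Applying the former to the latter and again pulling the scalars through the linear map $\alpha_{x'}^*$ produces the double sum $\sum_{x',y'}\trace(\gamma_{x'}P_x)\trace(\gamma'_{y'}P'_y)\,\alpha_{x'}^*\sqbrac{(\alpha')_{y'}^\wedge}$, which is \eqref{eq34}.

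There is no deep obstacle here; the computation is linear bookkeeping. The one point needing genuine care is the interaction between the dual and the scalar-weighted sum: I must confirm that $\paren{\sum_{x'}c_{x'}\alpha_{x'}}^*=\sum_{x'}c_{x'}\alpha_{x'}^*$ for the real scalars $c_{x'}=\trace(\gamma_{x'}P_x)$, which is immediate from the defining relation $\trace\sqbrac{\rho\,\iscript^*(B)}=\trace\sqbrac{\iscript(\rho)B}$ of the dual. The only other thing to watch is the order reversal in \eqref{eq31}: the outer instrument $\iscript^{\mscript'}$ contributes its observable $(\iscript^{\mscript'})_y^\wedge$ on the inside while the inner instrument $\iscript^\mscript$ contributes its dual $(\iscript_x^\mscript)^*$ on the outside, so one must ensure the roles of $\alpha$ and $\alpha'$ are not accidentally swapped.
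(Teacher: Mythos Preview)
Your proposal is correct and follows essentially the same route as the paper. The only cosmetic difference is that you invoke Lemma~\ref{lem21} (specifically \eqref{eq22}) to write $\iscript_x^\mscript$ and $\iscript_y^{\mscript'}$ as post-processings of $\alpha$ and $\alpha'$ and then compose, whereas the paper starts from the partial-trace expression \eqref{eq32} and re-derives the content of \eqref{eq22} inline; both arguments then finish \eqref{eq34} by citing \eqref{eq31}.
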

\begin{proof}
Applying \eqref{eq35} we obtain
\begin{align*}
(\iscript ^\mscript\circ\iscript ^{\mscript '})_{(x,y)}(\rho )
   &=\trace _{K'}\brac{\nu '\sqbrac{\trace _K\paren{\sum _{x'}\alpha _{x'}(\rho )\otimes\gamma _{x'}I\otimes P_x}}I\otimes P'_y}\\
   &=\sum _{x'}\trace _{K'}\brac{\nu '\sqbrac{\trace _K\paren{\alpha _{x'}(\rho )\otimes\gamma _{x'}P_x}}I\otimes P'_y}\\
   &=\sum _{x'}\trace _{K'}\brac{\nu '\sqbrac{\trace (\gamma _{x'}P_x)\alpha _{x'}(\rho )}I\otimes P'_y}\\
   &=\sum _{x'}\trace (\gamma _{x'}P_x)\trace _{K'}\brac{\nu '\sqbrac{\alpha _{x'}(P)}I\otimes P'_y}\\
   &=\sum _{x'}\trace (\gamma _{x'}P_x)\trace _{K'}\sqbrac{\sum _{y'}\alpha '_{y'}\paren{\alpha _{x'}(\rho )}\otimes\gamma '_{y'}I\otimes P'_y}\\
   &=\sum _{x',y'}\trace (\gamma _{x'}P_x)\trace _{K'}\sqbrac{\alpha '_{y'}\paren{\alpha _{x'}(\rho )}\otimes\gamma '_{y'}P'_y}\\
   &=\sum _{x',y'}\trace (\gamma _{x'}P_x)\trace (\gamma '_{y'}P'_y)\alpha '_{y'}\paren{\alpha _{x'}(\rho )}\\
   &=\sum _{x',y'}\trace (\gamma _{x'}P_x)\trace (\gamma '_{y'}P'_y)(\alpha\circ\alpha ')_{(x',y')}(\rho )
\end{align*}
and \eqref{eq33} follows. Applying \eqref{eq31} gives \eqref{eq34}
\end{proof}

Let $\nu\colon\sscript (H)\to\sscript (H\otimes K)$ and $\nu '\colon\sscript (H)\to\sscript (H\otimes K')$ be channels. Motivated by \eqref{eq32}, we say that a channel
$\mu\colon\sscript (H)\to\sscript (H\otimes K\otimes K')$ is a \textit{product channel} for $\nu ,\nu'$ if
\begin{equation}                
\label{eq35}
\trace _{K\otimes K'}\sqbrac{\mu (\rho )I\otimes a\otimes b}=\trace _{K'}\brac{\nu '\sqbrac{\trace _K\paren{\nu (\rho )I\otimes a}}I\otimes b}
\end{equation}
for all $a\in\escript (K)$, $b\in\escript (K')$, $\rho\in\sscript (H)$.

\begin{thm}    
\label{thm32}
Let $\nu (\rho )=\sum\limits _y\alpha _y(\rho )\otimes\gamma _y$, $\nu '(\rho )=\sum\limits _{y'}\alpha '_{y'}(\rho )\otimes\gamma '_{y'}$ be separable.
{\rm{(i)}}\enspace We have that $\mu$ is a product channel for $\nu ,\nu'$ if and only if for all $a\in\escript (K)$, $b\in\escript (K')$, $\rho\in\sscript (H)$ we have
\begin{equation}                
\label{eq36}
\trace _{K\otimes K'}\sqbrac{\mu (\rho )I\otimes a\otimes b}=\sum _{x',y'}\trace (\gamma _{x'}a)\trace (\gamma '_{y'}b)\alpha '_{y'}(\alpha _{x'}(\rho ))
\end{equation}
{\rm{(ii)}}\enspace We have that $\mu$ is a product channel for $\nu ,\nu '$ if
\begin{equation}                
\label{eq37}
\mu (\rho )=\sum _{x',y'}\alpha '_{y'}\sqbrac{\alpha _{x'}(\rho )}\otimes\gamma _{x'}\otimes\gamma '_{y'}
\end{equation}
for all $\rho\in\sscript (H)$.
\end{thm}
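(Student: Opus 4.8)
The plan is to treat part (i) as essentially the computation already carried out in the proof of Theorem~\ref{thm31}, now performed with arbitrary effects $a,b$ in place of the probe effects $P_x,P'_y$. The key observation is that the right-hand side of the defining equation \eqref{eq35} involves only $\nu$ and $\nu'$ and carries no dependence on $\mu$ whatsoever. First I would substitute the separable expressions into that right-hand side: using $\trace _K\sqbrac{A\otimes B}=\trace (B)\,A$, the inner partial trace $\trace _K\sqbrac{\nu (\rho )\,I\otimes a}$ collapses to $\sum _{x'}\trace (\gamma _{x'}a)\,\alpha _{x'}(\rho )$. Applying (the linear extension of) $\nu'$, then multiplying by $I\otimes b$ and tracing out $K'$, produces $\sum _{x',y'}\trace (\gamma _{x'}a)\trace (\gamma '_{y'}b)\,\alpha '_{y'}\paren{\alpha _{x'}(\rho )}$, which is exactly the right-hand side of \eqref{eq36}. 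Since the right-hand side of \eqref{eq35} thus equals the right-hand side of \eqref{eq36} independently of $\mu$, condition \eqref{eq35} holds if and only if \eqref{eq36} holds, giving the stated equivalence.

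For part (ii) I would proceed in two stages. Since ``product channel'' presupposes that $\mu$ is a channel, I must first confirm that the candidate of \eqref{eq37} really is one. Complete positivity is inherited, because $\mu$ is built by composing the completely positive maps $\alpha _{x'}$ and $\alpha '_{y'}$ and tensoring the output with the fixed states $\gamma _{x'}\otimes\gamma '_{y'}$. Trace preservation is the one point needing attention: using $\trace (\gamma _{x'})=\trace (\gamma '_{y'})=1$ and then the channel identities $\sum _{y'}\trace\sqbrac{\alpha '_{y'}(\sigma )}=\trace (\sigma )$ and $\sum _{x'}\trace\sqbrac{\alpha _{x'}(\rho )}=\trace (\rho )$ (valid because $\overline{\alpha '}$ and $\alphabar$ are channels) yields $\trace\sqbrac{\mu (\rho )}=\trace (\rho )=1$. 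Second, I would compute the left-hand side of \eqref{eq36} for this $\mu$ directly: tracing $\mu (\rho )\,I\otimes a\otimes b$ over $K\otimes K'$ factors each summand as $\trace (\gamma _{x'}a)\trace (\gamma '_{y'}b)\,\alpha '_{y'}\paren{\alpha _{x'}(\rho )}$, which is precisely the right-hand side of \eqref{eq36}. By part~(i), $\mu$ is then a product channel for $\nu ,\nu'$.

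The computations here are routine partial-trace bookkeeping, so I do not anticipate a genuine obstacle. The two places requiring a moment of care are: (a) noticing in part~(i) that the right-hand side of \eqref{eq35} is independent of $\mu$, which is exactly what upgrades the computation into an ``if and only if''; and (b) the trace-preservation check in part~(ii), where one must apply the channel property of $\overline{\alpha '}$ to the operator $\alpha _{x'}(\rho )$ (which is positive but in general has trace less than $1$) before summing over $x'$ and invoking the channel property of $\alphabar$. Everything else follows from linearity and the elementary identity $\trace _K\sqbrac{A\otimes B}=\trace (B)\,A$.
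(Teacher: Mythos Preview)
Your proposal is correct and follows essentially the same route as the paper: for (i) you compute the right-hand side of \eqref{eq35} using the separable forms and observe it is independent of $\mu$, and for (ii) you evaluate the left-hand side of \eqref{eq36} for the candidate $\mu$ and invoke (i). The only difference is that you explicitly verify $\mu$ is a channel (complete positivity and trace preservation), a point the paper leaves implicit; this is a welcome bit of added rigor but not a change in strategy.
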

\begin{proof}
(i)\enspace As in the proof of Theorem~\ref{thm31} we have
\begin{align*}
\trace _{K'}\brac{\nu '\sqbrac{\trace _K\paren{\nu (\rho )I\otimes a}I\otimes b}}
   &=\trace _{K'}\brac{\nu '\sqbrac{\trace _K\paren{\sum _{x'}\alpha _{x'}(\rho )\otimes\gamma _{x'}I\otimes a}}I\otimes b}\\
   &=\sum _{x',y'}\trace (\gamma _{x'}a)\trace (\gamma '_{y'}b)\alpha '_{y'}\paren{\alpha _{x'}(\rho )}
\end{align*}
Hence, \eqref{eq35} holds if and only if \eqref{eq36} holds.
(ii)\enspace If \eqref{eq37} holds, then 
\begin{align*}
\trace _{K\otimes K'}\sqbrac{\mu (\rho )I\otimes a\otimes b}
   &=\trace _{K\otimes K'}\brac{\sum _{x',y'}\alpha '_{y'}\sqbrac{\alpha _{x'}(\rho )}\otimes\gamma _{x'}\otimes\gamma '_{y'}I\otimes a\otimes b}\\
   &=\sum _{x',y'}\trace _{K\otimes K'}\brac{\alpha '_{y'}\sqbrac{\alpha _{x'}(\rho )}\otimes\gamma _{x'}a\otimes\gamma '_{y'}b}\\
   &=\sum _{x',y'}\trace (\gamma _{x'}a)\trace (\gamma '_{y'}b)\alpha '_{y'}\paren{\alpha _{x'}(\rho )}
\end{align*}
for all $a\in\escript (K)$, $b\in\escript (K')$, $\rho\in\sscript (H)$. Hence, \eqref{eq36} holds so $\mu$ is a product channel for $\nu ,\nu '$.
\end{proof}

If $\mu\colon\sscript (H)\to\sscript (H\otimes K\otimes K')$ satisfies \eqref{eq37}, we see that $\mu$ is separable and by Theorem~\ref{thm32}(ii), $\mu$ is a product channel for $\nu ,\nu '$ and we write $\mu =\nu\times\nu '$. Let $\mscript =(K,\nu ,P)$, $\mscript '=(K',\nu ',P')$ be MMs for $H$ and let
$\mu\colon\sscript (H)\to\sscript (H\otimes K\otimes K')$ be a product channel for $\nu ,\nu '$. Then the MM given by $\mscript ''=(K\otimes K',\mu, P\otimes P')$ is called a \textit{sequential product} of $\mscript$ and $\mscript '$. In $\mscript ''$ we interpret $P\otimes P'$ as the observable $(P\otimes P')_{(x,y)}=P_x\otimes P'_y$. Applying \eqref{eq31}, the instrument measured by $\mscript ''$ becomes
\begin{align}                 
\label{eq38}
\iscript _{(x,y)}^{\mscript ''}(\rho )&=\trace _{K\otimes K'}\sqbrac{\mu (\rho )I\otimes (P\otimes P')_{(x,y)}}
   =\trace _{K\otimes K'}\sqbrac{\mu (\rho )I\otimes P_x\otimes P_{x'}}\notag\\
   &=\trace _{K'}\brac{\nu '\sqbrac{\trace _K\paren{\nu (\rho )I\otimes P_x}I\otimes P'_y}}=(\iscript ^\mscript\circ\jscript ^{\mscript '})_{(x,y)}(\rho )
\end{align}
We conclude that the instrument measured by $\mscript ''$ is the sequential product of their respective measured instruments. If $\nu$ and $\nu '$ are separable and $\mu$ satisfies \eqref{eq37}, we write $\mscript ''=\mscript\circ\mscript '$.

If $\nu\colon\sscript (H)\to\sscript (H\otimes K)$, $\nu '\colon\sscript (H)\to\sscript (H\otimes K')$ are channels, we define the channel
$(\nu '\mid\nu )\colon\sscript (H)\to\sscript (H\otimes K')$ to be
\begin{equation*}
(\nu '\mid\nu )(\rho )=\nu '\sqbrac{\trace _K(\nu (\rho ))}
\end{equation*}
and call $(\nu '\mid\nu )$ the channel $\nu '$ \textit{conditioned by} $\nu$. For MMs $\mscript =(K,\nu ,P)$, $\mscript '=(K',\nu ',P')$ we define the MM called
$\mscript '$ \textit{conditioned by} $\mscript$ to be
\begin{equation*}
\mscript '\mid\mscript =(K',(\nu '\mid\nu ),P')
\end{equation*}
Notice that in this definition, the probe observable $P$ is lost. If $\iscript ,\jscript =\iscript (H)$ we define $\jscript\mid\iscript\in\iscript (H)$ by
$\Omega _{\jscript\mid\iscript}=\Omega _\jscript$ and $(\jscript\mid\iscript )_y(\rho )=\jscript _y\paren{\iscriptbar (\rho )}$. We interpret $\jscript\mid\iscript$ as the instrument obtained from first measuring $\iscript$, disregarding the result and then measuring $\jscript$. Part (ii) of the next theorem unites the concepts of conditioned and sequential products of MMs.

\begin{thm}    
\label{thm33}
{\rm{(i)}}\enspace If $\nu (\rho )=\sum\alpha _x(\rho )\otimes\gamma _x$, $\nu '(\rho )=\sum\alpha '_y(\rho )\otimes\gamma '_y$ are separable then $(\nu '\mid\nu )$ is separable and
\begin{equation*}
(\nu '\mid\nu )(\rho )=\sum _y\alpha '_y\paren{\alphabar (\rho )}\otimes\gamma '_y               
\end{equation*}
{\rm{(ii)}}\enspace If $\mscript =(K,\nu ,P)$, $\mscript '(K',\nu ',P')$ are MMs, then $\iscript ^{\mscript '\mid\mscript}=(\iscript ^{\mscript '}\mid\iscript ^{\mscript})$.
Moreover, if $\mscript ''$ is a sequential product of $\mscript$ and $\mscript '$, then 
\begin{equation*} 
(\iscript ^{\mscript '}\mid\iscript ^{\mscript})_y(\rho )=\sum _x(\iscript ^{\mscript}\circ\iscript ^{\mscript '})_{(x,y)}(\rho )=\sum _x\iscript _{(x,y)}^{\mscript ''}(\rho )     
\end{equation*}
for every $\rho\in\sscript (H)$.
\end{thm}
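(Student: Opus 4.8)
The plan is to treat the assertions in turn, each reducing to a short computation once the definitions are unwound; the only real subtlety is tracking where the probe observable $P$ of $\mscript$ disappears.

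For part (i), I would begin by computing $\trace_K[\nu(\rho)]$ directly from the separable form. Since each $\gamma_x\in\sscript(K)$ has $\trace(\gamma_x)=1$, the partial trace annihilates every probe factor and leaves $\trace_K[\nu(\rho)]=\sum_x\alpha_x(\rho)=\alphabar(\rho)$. Substituting into the definition $(\nu'\mid\nu)(\rho)=\nu'[\trace_K(\nu(\rho))]$ and then expanding $\nu'$ by its own separable form gives $(\nu'\mid\nu)(\rho)=\sum_y\alpha'_y(\alphabar(\rho))\otimes\gamma'_y$, which is the stated formula. To conclude separability I would note that $\brac{\alpha'_y\circ\alphabar}$ is an instrument: each $\alpha'_y\circ\alphabar$ is an operation, and $\sum_y(\alpha'_y\circ\alphabar)=\overline{\alpha'}\circ\alphabar$ is a composition of channels, hence a channel. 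Thus the displayed expression already exhibits $(\nu'\mid\nu)$ in separable form with probe states $\gamma'_y$.

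For the first assertion of part (ii), I would start from the instrument measured by $\mscript'\mid\mscript=(K',(\nu'\mid\nu),P')$, namely $\iscript_y^{\mscript'\mid\mscript}(\rho)=\trace_{K'}[(\nu'\mid\nu)(\rho)\,I\otimes P'_y]$, and insert $(\nu'\mid\nu)(\rho)=\nu'[\trace_K(\nu(\rho))]$. Recognizing $\trace_K[\nu(\rho)]=\iscriptbar^{\,\mscript}(\rho)$, the channel of $\iscript^{\mscript}$ established in Section~2, and comparing with the defining formula for $\iscript^{\mscript'}$ yields $\iscript_y^{\mscript'\mid\mscript}(\rho)=\iscript_y^{\mscript'}(\iscriptbar^{\,\mscript}(\rho))$, which is precisely $(\iscript^{\mscript'}\mid\iscript^{\mscript})_y(\rho)$ by the definition of the conditioned instrument. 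This is where the loss of $P$ becomes visible: the full partial trace $\trace_K$, rather than a trace weighted by some $P_x$, is exactly what the conditioned-channel definition produces.

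For the remaining chain of equalities, I would use the definition of the sequential product of instruments, $(\iscript^{\mscript}\circ\iscript^{\mscript'})_{(x,y)}(\rho)=\iscript_y^{\mscript'}[\iscript_x^{\mscript}(\rho)]$, sum over $x$, and pull $\iscript_y^{\mscript'}$ out by linearity to obtain $\iscript_y^{\mscript'}[\sum_x\iscript_x^{\mscript}(\rho)]=\iscript_y^{\mscript'}(\iscriptbar^{\,\mscript}(\rho))=(\iscript^{\mscript'}\mid\iscript^{\mscript})_y(\rho)$. The final equality $\sum_x(\iscript^{\mscript}\circ\iscript^{\mscript'})_{(x,y)}(\rho)=\sum_x\iscript_{(x,y)}^{\mscript''}(\rho)$ is immediate from \eqref{eq38}, which already identifies $\iscript_{(x,y)}^{\mscript''}$ with $(\iscript^{\mscript}\circ\iscript^{\mscript'})_{(x,y)}$. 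None of the steps is hard; the main thing to get right is the bookkeeping of which partial trace and which probe observable appear at each stage, together with the observation $\trace(\gamma_x)=1$ at the very first step.
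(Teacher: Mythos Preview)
Your proposal is correct and follows essentially the same approach as the paper. The only cosmetic difference is in the last chain of equalities: the paper unwinds $\sum_x(\iscript^{\mscript}\circ\iscript^{\mscript'})_{(x,y)}(\rho)$ via \eqref{eq32} and then uses $\sum_xP_x=I$ inside the partial trace, whereas you obtain the same conclusion more directly from linearity of $\iscript_y^{\mscript'}$ applied to $\sum_x\iscript_x^{\mscript}(\rho)=\iscriptbar^{\,\mscript}(\rho)$; both routes are equally valid and amount to the same computation.
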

\begin{proof}
(i)\enspace For all $\rho\in\sscript (H)$ we have
\begin{align*}
(\nu '\mid\nu )(\rho )=\nu '\sqbrac{\trace _K\paren{\nu (\rho )}}&=\nu '\sqbrac{\trace _K\paren{\sum _x\alpha _x(\rho )\otimes\gamma _x}}
   =\sum _x\nu '\sqbrac{\trace _K\paren{\alpha _x(\rho )\otimes\gamma _x}}\\
   &=\sum _x\nu '\paren{\alpha _x(\rho )}=\nu '\paren{\alphabar (\rho )}=\sum _y\alpha '_y\paren{\alphabar (\rho )}\otimes\gamma '_y
\end{align*}
(ii)\enspace For all $\rho\in\sscript (H)$, $y\in\Omega _{\iscript ^{\mscript '}}$ we have
\begin{align}                 
\label{eq39}
(\iscript ^{\mscript '\mid\mscript})_y(\rho )&=\trace _{K'}\brac{(\nu '\mid\nu )(\rho )I\otimes P'_y}
   =\trace _{K'}\brac{\nu '\sqbrac{\trace _K\paren{\nu (\rho )}}I\otimes P'_y}\notag\\
   &=\iscript _y^{\mscript '}\sqbrac{\trace _K\paren{\nu (\rho )}}=\iscript _y^{\mscript '}\paren{\iscriptbar ^{\mscript}(\rho )}
   =(\iscript ^{\mscript '}\mid\iscript ^\mscript )_y(\rho )
\end{align}
and the result follows. Applying \eqref{eq38} and \eqref{eq39} we have
\begin{align*}
\sum _x\iscript _{(x,y)}^{\mscript ''}(\rho )&=\sum _x(\iscript ^{\mscript}\circ\iscript ^{\mscript '})_{(x,y)}(\rho )
   =\trace _{K'}\brac{\nu '\sqbrac{\trace _K\paren{\nu (\rho )I\otimes\sum _xP_x}}I\otimes P'_y}\\
   &=\trace _{K'}\brac{\nu '\sqbrac{\trace _K\paren{\nu (\rho )I\otimes I}}I\otimes P'_y}\\
   &=\trace _{K'}\brac{\nu '\sqbrac{\trace _K\paren{\nu (\rho )}}I\otimes P'_y}=(\iscript ^{\mscript '}\mid\iscript ^\mscript )_y(\rho )\qedhere
\end{align*}
\end{proof}

\begin{exam}{3}  
Let $\nu (\rho )=\sum\limits _y\alpha _y(\rho )\otimes\gamma _y$, $\nu '(\rho )=\sum _{y'}\alpha '_{y'}(\rho )\gamma '_{y'}$ be separable channels with
$\alpha _y(\rho )=C_y\rho C_y^*$, $\alpha '_{y'}(\rho )=D_{y'}\rho D_{y'}^*$ being Kraus instruments. Define the product channel $\nu\times\nu '$ for $\nu ,\nu '$ according to Theorem~\ref{thm32}(ii) by
\begin{equation*}
\nu\times\nu '(\rho )=\sum _{x',y'}\alpha '_{y'}\sqbrac{\alpha _{x'}(\rho )}\otimes\gamma _{x'}\otimes\gamma '_{y'}
   =\sum _{x',y'}D_{y'}C_{x'}\rho C_{x'}^*D_{y'}^*\otimes\gamma _{x'}\otimes\gamma '_{y'}
\end{equation*}
If $\mscript =(K,\nu ,P)$, $\mscript '=(K',\nu ',P')$, we have the sequential product MM
\begin{equation*}
\mscript\circ\mscript '=(K\otimes K',\nu\times\nu ',P\otimes P')
\end{equation*}
The instrument measured by $\mscript\circ\mscript '$ according to \eqref{eq38} and Theorem~\ref{thm31} becomes
\begin{align*}
\iscript _({x,y)}^{\mscript\circ\mscript '}(\rho )&=(\iscript\circ\iscript ^{\mscript '})_{(x,y)}(\rho )
   =\sum _{x',y'}\trace (\gamma _{x'}P_x)\trace (\gamma '_{y'}P'_y)\alpha '_{y'}\sqbrac{\alpha _{x'}(\rho )}\\
   &=\sum _{x',y'}\trace (\gamma _{x'}P_x)\trace (\gamma '_{y'}P'_y)D_{y'}C_{x'}\rho C_{x'}^*D_{y'}^*
\end{align*}
By Theorem~\ref{thm31}, the observable measured by $\mscript\circ\mscript '$ is
\begin{equation*}
(\iscript ^{\mscript\circ\mscript '}_{(x,y)})^\wedge=\sum _{x',y'}\trace (\gamma _{x'}P_x)\trace (\gamma '_{y'}P'_y)C_{x'}^*D_{y'}^*C_{y'}C_{x'}
\end{equation*}
From Theorem~\ref{thm33}(i) the channel $(\nu '\mid\nu )$ becomes
\begin{equation*}
(\nu '\mid\nu )(\rho )=\sum _{y,y'}\alpha '_{y'}\paren{\alpha _y(\rho )}\otimes\gamma '_{y'}=\sum _{y,y'}D_{y'}C_y\rho C_y^*D_{y'}^*\otimes\gamma '_{y'}
\end{equation*}
We also have
\begin{align*}
\iscript _y^{\mscript '\mid\mscript}(\rho )&=(\iscript ^{\mscript '}\mid\iscript ^\mscript )_y(\rho )=\iscript _y^{\mscript '}\paren{\iscriptbar ^\mscript (\rho )}
   =\iscript _y^{\mscript '}\paren{\sum _xC_x\rho C_x^*}\\
   &=\sum _x\iscript _y^{\mscript '}(C_x\rho C_x^*)=\sum _{x,y'}\trace (\gamma '_{y'}P'_y)D_{y'}C_x\rho C_x ^*D_{y'}^*
\end{align*}
Moreover,
\begin{equation*}
(\iscript ^{\mscript '\mid\mscript})_y^\wedge =\sum _{x,y'}\trace (\gamma '_{y'}P'_y)C_x^*D_{y'}^*D_{y'}C_x
\end{equation*}
In particular, if $\alpha _y(\rho )=A_y^{1/2}\rho A_y^{1/2}$, $\alpha '_{y'}(\rho )=B_{y'}^{1/2}\rho B_{y'}^{1/2}$ are L\"uders instruments, the formulas are similar. For example, 
\begin{align*}
(\iscript ^{\mscript\circ\mscript '})_{(x,y)}^\wedge&=\sum _{x',y'}\trace (\gamma _{x'}P_x)\trace (\gamma '_{y'}P'_y)A_x^{1/2}B_{y'}A_{x'}^{1/2}\\
   (\iscript ^{\mscript '\mid\mscript})_y^\wedge&=\sum _{x',y'}\trace (\gamma '_{y'}P'_y)A_x^{1/2}B_{y'}A_x^{1/2}\hskip 8pc\qedsymbol
\end{align*}
\end{exam}

\begin{exam}{4}  
Let $\nu (\rho )=\sum\limits _y\alpha _y(\rho )\otimes\gamma _y$, $\nu '(\rho )=\sum\limits _{y'}\alpha '_{y'}(\rho )\otimes\gamma '_{y'}$ be separable channels with
$\alpha _y(\rho )=\trace (\rho A_y)\beta _y$, $\alpha '_{y'}(\rho )=\trace (\rho B_{y'})\beta '_{y'}$ being Holevo instruments. Define the separable product channel
$\nu\times\nu '$ for $\nu ,\nu '$ according to Theorem~\ref{thm32}(ii) by
\begin{align*}
\nu\times\nu '&=\sum _{x',y'}\alpha '_{y'}\sqbrac{\alpha _{x'}(\rho )}\otimes\gamma _{x'}\otimes\gamma '_{y'}
   =\sum _{x',y'}\alpha '_{y'}\sqbrac{\trace (\rho A_{x'})\beta _{x'}}\otimes\gamma _{x'}\otimes\gamma '_{y'}\\
   &=\sum _{x',y'}\trace (\rho A_{x'})\alpha '_{y'}(\beta _{x'})\otimes\gamma _{x'}\otimes\gamma '_{y'}\\
   &=\sum _{x',y'}\trace (\rho A_{x'})\trace (\beta _{x'}B_{y'})\beta '_{y'}\otimes\gamma _{x'}\otimes\gamma '_{y'}
\end{align*}
Letting $\mscript =(K,\nu ,P)$, $\mscript '=(K',\nu ',P')$ we have the sequential product MM given by
\begin{equation*}
\mscript\circ\mscript '=(K\otimes K',\nu\times\nu ',P\otimes P')
\end{equation*}
The instrument measured by $\mscript\circ\mscript '$ becomes by \eqref{eq38} and Theorem~\ref{thm31}
\begin{align*}
\iscript _{(x,y)}^{\mscript\circ\mscript '}(\rho )&=(\iscript ^\mscript\circ\iscript ^{\mscript '})_{(x,y)}(\rho )
   =\sum _{x',y'}\trace (\gamma _{x'}P_x)\trace (\gamma '_{y'}P'_y)\alpha '_{y'}\sqbrac{\alpha _{x'}(\rho )}\\
   &=\sum _{x',y'}\trace (\gamma _{x'}P_x)\trace (\gamma '_{y'}P'_y)\alpha '_{y'}\sqbrac{\trace (\rho A_{x'})\beta _{x'}}\\
   &=\sum _{x',y'}\trace (\gamma _{x'}P_x)\trace (\gamma '_{y'}P'_y)\trace (\rho A_{x'})\alpha '_{y'}(\beta _{x'})\\
   &=\sum _{x',y'}\trace (\gamma _{x'}P_x)\trace (\gamma '_{y'}P'_y)\trace (\rho A_{x'})\trace (\beta _{x'}B_{y'})\beta '_{y'}
\end{align*}
By Theorem~\ref{thm31}, the observable measured by $\mscript\circ\mscript '$ is
\begin{equation*}
(\iscript ^{\mscript\circ\mscript '})_{(x,y)}^\wedge =\sum _{x',y'}\trace (\gamma _{x'}P_x)\trace (\gamma '_{y'}P'_y)\trace (\beta _{x'}B_{y'})A_{x'}
\end{equation*}
From Theorem~\ref{thm33}(i) the channel $(\nu '\mid\nu )$ becomes
\begin{align*}
(\nu '\mid\nu )(\rho )&=\sum _{y,y'}\alpha '_{y'}\paren{\alpha _y(\rho )}\otimes\gamma '_{y'}
   =\sum _{y,y'}\alpha '_{y'}\sqbrac{\trace (\rho A_y)\beta _y}\otimes\gamma '_{y'}\\
   &=\sum _{y,y'}\trace (\rho A_y)\alpha '_{y'}(\beta _y)\otimes\gamma '_{y'}=\sum _{y,y'}\trace (\rho A_y)\trace (\beta _yB_{y'})\beta '_{y'}\otimes\gamma '_{y'}
\end{align*}
We also have
\begin{align*}
\iscript _y^{\mscript '\mid\mscript}(\rho )&=\iscript _y^{\mscript '}\sqbrac{\iscriptbar ^\mscript (\rho )}
   =\iscript _y^{\mscript '}\sqbrac{\sum _{y'}\trace (\rho A_{y'})\beta _{y'}}\\
   &=\sum _{y'}\trace (\rho A_{y'})\iscript _y^{\mscript '}(\beta _{y'})=\sum _{y'}\trace (\rho A_{y'})\sum _x\trace (\gamma '_xP'_y)\trace (\beta _{y'}B_x)\beta '_x\\
   &=\sum _{x,y'}\trace (\rho A_{y'})\trace (\gamma '_xP'_y)\trace (\beta _{y'}B_x)\beta '_x
\end{align*}
Moreover,
\begin{equation*}
(\iscript ^{\mscript '\mid\mscript})_y^\wedge =\sum _{x,y'}\trace (\gamma '_xP'_y)\trace (\beta _{y'}B_x)A_{y'}
\end{equation*}

As a special case, let $A_y=\lambda _yI$, $B_{y'}=\mu _{y'}I$ be identity observables and $\alpha _y(\rho )=\trace (\rho A_y)\beta =\lambda _y\beta$,
$\alpha '_{y'}(\rho )=\trace (\rho B_{y'})\beta '=\mu _{y'}\beta '$ be corresponding trivial instruments. Then
\begin{equation*}
\nu\times\nu '(\rho )=\sum _{x',y'}\lambda _{x'}\mu _{y'}\beta '\otimes\gamma _{x'}\otimes\gamma '_{y'}=\beta '\otimes\gamma\otimes\gamma '
\end{equation*}
where $\gamma =\sum\limits _{x'}\lambda _{x'}\gamma _{x'}\in\sscript (H)$, $\gamma '=\sum\limits _{y'}\mu _{y'}\gamma '_{y'}\in\sscript (K')$. We then obtain
\begin{equation*}
\iscript _{(x,y)}^{\mscript\circ\mscript '}(\rho )=\sum _{x',y'}\trace (\gamma _{x'}P_x)\trace (\gamma '_{y'}P'_y)\lambda _{x'}\mu _{y'}\beta '
   =\trace (\gamma P_x)\trace (\gamma 'P'_y)\beta '
\end{equation*}
which is a trivial instrument. We also have
\begin{equation*}
(\iscript ^{\mscript\circ\mscript '})_{(x,y)}^\wedge =\trace (\gamma P_x)\trace (\gamma 'P'_y)I
\end{equation*}
which is an identity observable. From Theorem~\ref{thm33}(i), $(\nu '\mid\nu )$ becomes the constant channel
\begin{equation*}
(\nu '\mid\nu )(\rho )=\sum _{y,y'}\lambda _y\mu _{y'}\beta '\otimes\gamma '_{y'}=\beta '\otimes\gamma '
\end{equation*}
We also have
\begin{equation*}
\iscript _y^{\mscript '\mid\mscript}(\rho )=\sum _{x,y'}\lambda _{y'}\trace (\gamma '_xP'_y)\mu _x\beta '=\trace (\gamma 'P'_y)\beta '
\end{equation*}
which is a trivial instrument. Finally,
\begin{equation*}
(\iscript ^{\mscript '\mid\mscript})_y^\wedge =\trace (\gamma 'P'_y)I
\end{equation*}
which is an identity observable.\hskip 16pc\qedsymbol
\end{exam}

\section{Measurement Model Statistics}  
An observable $A=\brac{A_x\colon x\in\Omega _A}$ is \textit{real-valued} if $\Omega _A\subseteq\real$ \cite{gud23,hz12}. When $A$ is real-valued, the
\textit{stochastic operator for} $A$ is the self-adjoint operator $\atilde =\sum\limits _xxA_x$ \cite{gud23}. In a similar way, an instrument $\iscript$ is
\textit{real-valued} if $\Omega _\iscript\subseteq\real$ and the \textit{stochastic map for} $\iscript$ is $\iscripttilde =\sum\limits _xx\iscript _x$. A MM given by
$\mscript =\brac{K,\nu ,P}$ is \textit{real-valued} if $P$ is real-valued and we define the \textit{stochastic operator for} $\mscript$ to be
$\mscripttilde =\sqbrac{(\iscript ^\mscript )^\wedge}^\sim$. If $B\in\lscript _S(H)$, $\rho\in\sscript (H)$, the $\rho$-\textit{expectation} (or $\rho$-\textit{average value}) of $B$ is $E_\rho (B)=\trace (\rho B)$. For a real-valued observable $A$ or a real-valued instrument $\iscript$ and $\rho\in\sscript (H)$ we define
\begin{align*}
E_\rho (A)&=E_\rho (\atilde\,)=\trace (\rho\atilde\,)=\sum _xx\trace (\rho A_x)\\
\intertext{and}
E_\rho (\iscript )&=\trace\sqbrac{\iscripttilde (\rho )}=\sum _xx\trace\sqbrac{\iscript _x(\rho )}=\sum _xx\trace (\rho\iscripthat _x)=E_\rho (\iscripthat\,)
\end{align*}
If $\mscript$ is a real-valued MM, we define
\begin{equation*}
E_\rho (\mscript )=E_\rho (\mscripttilde )=\trace (\rho\mscripttilde )=\sum _xx\trace\sqbrac{\rho (\iscript ^\mscript )_x^\wedge}
   =\sum _xx\trace\sqbrac{\iscript _x^\mscript (\rho )}
\end{equation*}
For $B\in\lscript _S(H)$, $\rho\in\sscript (H)$, the $\rho$-\textit{variance} of $B$ is
\begin{equation*}
\Delta _\rho (B)=E_\rho\sqbrac{\paren{B-E_\rho (B)I}^2}=E_\rho (B^2)-E_\rho (B)^2=\trace (\rho B^2)-\sqbrac{\trace (\rho B)}^2
\end{equation*}
We define $\Delta _\rho (\mscript )=\Delta _\rho (\,\mscripttilde )$.

\begin{thm}    
\label{thm41}
{\rm{(i)}}\enspace If $\mscript =(K,\nu ,P)$ is a real-valued MM, then
\begin{align*}
E_\rho (\mscript )&=\trace\sqbrac{\nu (\rho )I\otimes\ptilde}\\
\Delta _\rho (\mscript )&=\trace\brac{\nu\sqbrac{\rho (\iscript ^\mscript )^\sim}I\otimes\ptilde}-\brac{\trace\sqbrac{\nu (\rho )I\otimes\ptilde}}^2
\end{align*}
{\rm{(ii)}}\enspace If $\nu (\rho )=\sum _y\alpha _y(\rho )\otimes\gamma _y$ is separable, then
\begin{align*}
E_\rho (\mscript )&=\sum _y\trace\sqbrac{\alpha _y(\rho )}\trace (\gamma _y\ptilde ),\quad\mscripttilde =\sum _y\trace (\gamma _y\ptilde )\alphahat _y\\
\Delta _\rho (\mscript )&=\sum _{y,y'}\trace (\gamma _y\ptilde )\trace (\gamma _{y'}\ptilde )
   \sqbrac{\trace (\rho\alphahat _y\alphahat _{y'})-\trace (\rho\alphahat _y)\trace (\rho\alphahat _{y'})}
\end{align*}
\end{thm}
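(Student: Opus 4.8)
The plan is to reduce both parts to the single identity $\mscripttilde = \nu ^*(I\otimes\ptilde )$ together with the defining property of the dual. First I would record, using the fact established in the text that $(\iscript ^\mscript )_x^\wedge =\nu ^*(I\otimes P_x)$ and that $\nu ^*$ is linear, that
\begin{equation*}
\mscripttilde =\sqbrac{(\iscript ^\mscript )^\wedge}^\sim =\sum _xx(\iscript ^\mscript )_x^\wedge =\sum _xx\nu ^*(I\otimes P_x)=\nu ^*\paren{I\otimes\sum _xxP_x}=\nu ^*(I\otimes\ptilde ).
\end{equation*}
For the first formula of (i) I would then apply $\trace\sqbrac{\rho\nu ^*(B)}=\trace\sqbrac{\nu (\rho )B}$ with $B=I\otimes\ptilde$ to get $E_\rho (\mscript )=\trace (\rho\mscripttilde )=\trace\sqbrac{\nu (\rho )I\otimes\ptilde}$.

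For the variance I would start from $\Delta _\rho (\mscript )=\trace (\rho\mscripttilde ^2)-\sqbrac{\trace (\rho\mscripttilde )}^2$; the subtracted term is already $\brac{\trace\sqbrac{\nu (\rho )I\otimes\ptilde}}^2$ by the expectation formula just proved, so only the first term requires work. Here I would write just one of the two factors of $\mscripttilde ^2$ as $\nu ^*(I\otimes\ptilde )$, leaving $\trace (\rho\mscripttilde ^2)=\trace\sqbrac{(\rho\mscripttilde )\nu ^*(I\otimes\ptilde )}$, and then apply the dual identity to the single operator $\rho\mscripttilde\in\lscript (H)$, reading the stated $\rho (\iscript ^\mscript )^\sim$ as this operator product $\rho\mscripttilde$. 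Because the dual relation is linear in its argument and the states span $\lscript (H)$, it extends from states to arbitrary operators, giving $\trace\sqbrac{(\rho\mscripttilde )\nu ^*(I\otimes\ptilde )}=\trace\sqbrac{\nu (\rho\mscripttilde )(I\otimes\ptilde )}$, which is the claimed expression. I expect this extension-by-linearity of the dual (so that $\nu$ and $\nu ^*$ may legitimately be fed the non-state operator $\rho\mscripttilde$) to be the one spot that needs care; every other manipulation is formal.

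For part (ii) I would substitute the separable forms supplied by Lemma~\ref{lem21}. Equation \eqref{eq23} gives $(\iscript ^\mscript )_x^\wedge =\sum _y\trace (\gamma _yP_x)\alphahat _y$, and summing $x(\iscript ^\mscript )_x^\wedge$ over $x$, pulling $\sum _xxP_x=\ptilde$ through the trace, yields $\mscripttilde =\sum _y\trace (\gamma _y\ptilde )\alphahat _y$. Applying $\trace (\rho\,\cdot\,)$ and using $\trace (\rho\alphahat _y)=\trace\sqbrac{\rho\alpha _y^*(I)}=\trace\sqbrac{\alpha _y(\rho )}$ then produces the expectation formula. Finally, setting $c_y=\trace (\gamma _y\ptilde )$ so that $\mscripttilde =\sum _yc_y\alphahat _y$, I would expand $\mscripttilde ^2=\sum _{y,y'}c_yc_{y'}\alphahat _y\alphahat _{y'}$ and combine $\trace (\rho\mscripttilde ^2)$ with $E_\rho (\mscript )^2=\sum _{y,y'}c_yc_{y'}\trace (\rho\alphahat _y)\trace (\rho\alphahat _{y'})$; the difference is exactly the asserted double sum. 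This part is pure bookkeeping once $\mscripttilde =\sum _y\trace (\gamma _y\ptilde )\alphahat _y$ is in hand.
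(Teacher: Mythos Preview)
Your proposal is correct and follows essentially the same route as the paper. The only cosmetic difference is that you organize the argument around the single identity $\mscripttilde=\nu^*(I\otimes\ptilde)$ and invoke the dual relation $\trace\sqbrac{\sigma\,\nu^*(B)}=\trace\sqbrac{\nu(\sigma)B}$, whereas the paper works outcome-by-outcome through $\iscript_x^\mscript(\rho)=\trace_K\sqbrac{\nu(\rho)I\otimes P_x}$ and the identity $\trace\sqbrac{\rho(\iscript^\mscript)_y^\wedge}=\trace\sqbrac{\iscript_y^\mscript(\rho)}$ before summing; for the variance term the paper likewise peels off one factor of $\mscripttilde$ and feeds $\rho(\iscript^\mscript)^\sim$ into $\iscript_y^\mscript$ and hence into $\nu$, exactly as you do. The extension-by-linearity to non-state arguments that you flag is used (silently) in the paper's proof as well, and part~(ii) is handled identically via Lemma~\ref{lem21}.
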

\begin{proof}
(i)\enspace For every $\rho\in\sscript (H)$ we have
\begin{align*}
E_\rho (\mscript )&=\sum _xx\trace\sqbrac{\iscript _x^\mscript (\rho )}=\sum _xx\trace\sqbrac{\nu (\rho )I\otimes P_x}\\
    &=\trace\sqbrac{\nu (\rho )I\otimes\sum _xP_x}=\trace\sqbrac{\nu (\rho )\otimes\ptilde}
\end{align*}
Moreover, we have
\begin{align*}
\trace (\rho\mscripttilde\,^2)&=\trace\sqbrac{\rho\sum _x(\iscript ^\mscript )_x^\wedge\sum _yy(\iscript ^\mscript )_y^\wedge}
   =\sum _{x,y}xy\sqbrac{\rho (\iscript ^\mscript )_x^\wedge (\iscript ^\mscript )_y^\wedge}\\
   &=\sum _{x,y}xy\trace\sqbrac{\iscript _y^\mscript\paren{\rho (\iscript ^\mscript )_x^\wedge}}
   =\sum _yy\sqbrac{\iscript _y^\mscript\paren{\rho (\iscript ^\mscript )^\sim}}\\
   &=\sum _yy\trace\sqbrac{\nu\paren{\rho (\iscript ^\mscript )^\sim}I\otimes P_y}=\trace\sqbrac{\nu\paren{\rho (\iscript ^\mscript )^\sim}I\otimes\ptilde}
\end{align*}
and the result follows.\newline
(ii)\enspace Applying (i) gives
\begin{align*}
E_\rho (\mscript )&=\trace\sqbrac{\sum _y\alpha _y(\rho )\otimes\gamma _yI\otimes\ptilde}=\sum _y\trace\sqbrac{\alpha _y(\rho )\otimes\gamma _y\ptilde}\\
   &=\sum _y\trace\sqbrac{\alpha _y(\rho )}\trace (\gamma _y\ptilde )
\end{align*}
Since $(\iscript ^\mscript )_x^\wedge =\sum _y\trace (\gamma _yP_x)\alphahat _y$ we obtain
\begin{equation*}
\mscripttilde =\sum _xx(\iscript ^\mscript )_x^\wedge =\sum _y\trace (\gamma _y\ptilde )\alphahat _y
\end{equation*}
Applying \eqref{eq23} we have $\sqbrac{(\iscript ^\mscript )^\wedge}^\sim=\sum\trace (\gamma _y\ptilde )\alphahat _y$. Hence,
\begin{equation*}
\brac{\sqbrac{(\iscript ^\mscript )^\wedge}^\sim}^2=\sum _{y,y'}\trace (\gamma _y\ptilde )\trace (\gamma _{y'}\ptilde )\alphahat _y\alphahat _{y'}
\end{equation*}
It follows that
\begin{align*}
\Delta _\rho (\mscript )&=\sum _{y,y'}\trace (\gamma _y\ptilde )\trace (\gamma _{y'}\ptilde )\trace (\rho\alphahat _y\alphahat _{y'})
   -\sum _{y,y'}\trace (\gamma _y\ptilde )\trace (\gamma _{y'}\ptilde )\trace (\rho\alphahat _y)\trace (\rho\alphahat _{y'})\\
   &=\sum _{y,y'}\trace (\gamma _y\ptilde )\trace (\gamma _{y'}\ptilde )
   \sqbrac{\trace (\rho\alphahat _y\alphahat _{y'})-\trace (\rho\alphahat _y)\trace (\rho\alphahat _{y'})}\qedhere
\end{align*}
\end{proof}

For MMs given by $\mscript ,\mscript '$ and $\rho\in\sscript (H)$ we define their $\rho$-\textit{correlation} as
\begin{equation*}
\rmcor _\rho (\mscript ,\mscript ')=\trace (\rho\mscripttilde{\mscript '}^\sim)-E_\rho (\mscripttilde )E_\rho ({\mscript '}^\sim )
\end{equation*}
their $\rho$-\textit{covariance} as $\Delta _\rho (\mscript ,\mscript  ')=\rmre\rmcor _\rho (\mscript ,\mscript ')$ and their $\rho$-\textit{commutator term} as
\begin{equation*}
\rmcomm _\rho (\mscript ,\mscript ')=\tfrac{1}{4}\ab{\trace\paren{\rho\sqbrac{\mscripttilde ,{\mscript '}^\sim}}}^2
\end{equation*}
Note that $\Delta _\rho (\mscript )=\Delta _\rho (\mscript ,\mscript )$. The following \textit{uncertainty principle} relates these various concepts.

\begin{thm}    
\label{thm42}
\rm{\cite{gud23}}\enspace
$\rmcomm _\rho (\mscript ,\mscript ')+\sqbrac{\Delta _\rho (\mscript ,\mscript ')}^2
   =\ab{\rmcor _\rho (\mscript ,\mscript ')}^2\le\Delta _\rho (\mscript )\Delta _\rho (\mscript ')$.
\end{thm}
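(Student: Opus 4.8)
The plan is to observe that $\mscripttilde$ and ${\mscript '}^\sim$ are self-adjoint operators on $H$, so the statement is precisely the Schr\"odinger uncertainty relation for these two operators in the state $\rho$. Writing $S=\mscripttilde$ and $T={\mscript '}^\sim$, both in $\lscript _S(H)$, I would first establish the \emph{equality} by decomposing the complex number $\rmcor _\rho (\mscript ,\mscript ')$ into its real and imaginary parts, and then establish the \emph{inequality} by a Cauchy--Schwarz argument applied to a semi-inner product built from $\rho$.

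For the equality, set $z=\rmcor _\rho (\mscript ,\mscript ')=\trace (\rho ST)-E_\rho (S)E_\rho (T)$. By definition the covariance is $\Delta _\rho (\mscript ,\mscript ')=\rmre\, z$. The key computation is that the commutator term equals $(\mathrm{Im}\,z)^2$. Since $S,T$ are self-adjoint and the trace is cyclic, $\overline{\trace (\rho ST)}=\trace\paren{(\rho ST)^*}=\trace (\rho TS)$, so $\trace\paren{\rho\sqbrac{S,T}}=\trace (\rho ST)-\trace (\rho TS)=2i\,\mathrm{Im}\,\trace (\rho ST)$ is purely imaginary. As $E_\rho (S)E_\rho (T)$ is real, $\mathrm{Im}\,\trace (\rho ST)=\mathrm{Im}\,z$, whence $\rmcomm _\rho (\mscript ,\mscript ')=\tfrac14\ab{2i\,\mathrm{Im}\,z}^2=(\mathrm{Im}\,z)^2$. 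The elementary identity $\ab z^2=(\rmre\, z)^2+(\mathrm{Im}\,z)^2$ then gives $\rmcomm _\rho (\mscript ,\mscript ')+\sqbrac{\Delta _\rho (\mscript ,\mscript ')}^2=\ab{\rmcor _\rho (\mscript ,\mscript ')}^2$ at once.

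For the inequality, introduce the centered operators $S_0=S-E_\rho (S)I$ and $T_0=T-E_\rho (T)I$, which are self-adjoint. Expanding and using $\trace (\rho I)=1$ one checks $\trace (\rho S_0T_0)=\trace (\rho ST)-E_\rho (S)E_\rho (T)=\rmcor _\rho (\mscript ,\mscript ')$, while $\trace (\rho S_0^2)=\Delta _\rho (\mscript )$ and $\trace (\rho T_0^2)=\Delta _\rho (\mscript ')$ by the variance formula. Now define $\langle A,B\rangle =\trace (\rho A^*B)$ on $\lscript (H)$; this is a positive semidefinite Hermitian form because $\rho\ge 0$ forces $\trace (\rho A^*A)\ge 0$, and conjugate symmetry follows from cyclicity of the trace. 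Cauchy--Schwarz for this form gives $\ab{\langle S_0,T_0\rangle}^2\le\langle S_0,S_0\rangle\langle T_0,T_0\rangle$, that is $\ab{\rmcor _\rho (\mscript ,\mscript ')}^2\le\Delta _\rho (\mscript )\Delta _\rho (\mscript ')$, completing the proof.

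The genuinely substantive step is the inequality, and the only point requiring care is that $\langle A,B\rangle =\trace (\rho A^*B)$ remains a valid (possibly degenerate) semi-inner product when $\rho$ is not invertible, so that Cauchy--Schwarz applies without assuming $\rho$ is faithful; positivity of $\trace (\rho A^*A)$ holds for every state, so there is no real obstacle here. The equality is essentially bookkeeping on real and imaginary parts, and both halves rest on nothing beyond self-adjointness of the stochastic operators and cyclicity of the trace.
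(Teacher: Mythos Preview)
Your proof is correct; this is precisely the standard derivation of the Schr\"odinger uncertainty relation applied to the self-adjoint stochastic operators $\mscripttilde$ and ${\mscript '}^\sim$. Note, however, that the paper itself does not prove Theorem~\ref{thm42}: it is stated with a citation to \cite{gud23} and no argument is given, so there is no in-paper proof to compare against. Your write-up thus supplies what the paper omits, and your approach---real/imaginary decomposition of $\rmcor _\rho$ for the equality, followed by Cauchy--Schwarz for the semi-inner product $\langle A,B\rangle =\trace (\rho A^*B)$ with the centered operators---is the expected one and matches the treatment in the cited reference.
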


We now compute the terms in Theorem~\ref{thm42} when $\nu (\rho )=\sum\limits _y\alpha _y(\rho )\otimes\gamma _y$ and
$\nu '(\rho )=\sum\limits _{y'}\alpha '_{y'}(\rho )\otimes\gamma '_{y'}$ are separable. We have that $\Delta _\rho (\mscript )$, $\Delta _\rho (\mscript ')$ are given in
Theorem~\ref{thm41}(ii). Also, applying Theorem~\ref{thm41}(ii) we obtain 
\begin{align*}
\sqbrac{\mscripttilde ,\mscripttilde '}&=\sqbrac{\sum _y\trace (\gamma _y\ptilde )\alphahat _y,\sum _{y'}\trace (\gamma '_{y'}\ptilde ')\alphahat\,'_{y'}}\\
   &=\sum _{y,y'}\trace (\gamma _y\ptilde )\trace (\gamma '_{y'}\ptilde ')\sqbrac{\alphahat _y,\alphahat\,'_{y'}}
\end{align*}
Hence,
\begin{equation*}
\trace\paren{\rho\sqbrac{\mscripttilde ,\mscripttilde '}}=\sum _{y,y'}\trace (\gamma _y\ptilde )\trace (\gamma '_{y'}\ptilde\,')
   \sqbrac{\trace (\rho\alphahat _y\alphahat\,'_{y'})-\trace (\rho\alphahat '_{y'}\alphahat _y)}
\end{equation*}
and $\rmcomm _\rho (\mscript ,\mscript ')$ follows directly. The $\rho$-correlation becomes
\begin{align*}
\rmcor _\rho (\mscript ,\mscript ')&=\trace\paren{\rho\sum _y\trace (\gamma _y\ptilde )\alphahat _y\sum _{y'}\trace (\gamma '_{y'}\ptilde\,')\alphahat\,'_{y'}}\\
   &\quad -\sum _{y,y'}\trace (\gamma _y\ptilde )\trace (\rho\alphahat _y)\trace (\gamma '_{y'}\ptilde\,')\trace (\rho\alphahat\,'_{y'})\\
   &=\sum _{y,y'}\trace (\gamma _y\ptilde )\trace (\gamma '_{y'}\ptilde\,')
   \sqbrac{\trace (\rho\alphahat _y\alphahat '_{y'})-\trace (\rho\alphahat _y)\trace (\rho\alphahat\,'_{y'})}
\end{align*}
The $\rho$-covariance follows.

\begin{exam}{5}  
As in Example~1, let $\nu (\rho )=\sum\limits _{y}\alpha _y(\rho )\otimes\gamma _y$, $\nu '(\rho )=\sum\limits _{y'}\alpha '_{y'}\otimes\gamma '_{y'}$
be separable channels with $\alpha _y(\rho )=A_y^{1/2}\rho A_y^{1/2}$, $\alpha '_{y'}(\rho )=B_{y'}^{1/2}\rho B_{y'}^{1/2}$ being L\"uders instruments and let
$\mscript =(K,\nu ,P)$, $\mscript '=(K',\nu ',P')$. Applying Theorem~\ref{thm41}(ii) we obtain
\begin{align*}
E_\rho (\mscript )&=\sum _y\trace (\rho A_y)\trace (\gamma _y\ptilde ),\quad \mscripttilde =\sum _y\trace (\gamma _y\ptilde )A_y\\
   \Delta _\rho (\mscript )&=\sum _{y,y'}\trace (\gamma _y\ptilde )\trace (\gamma _{y'}\ptilde )
   \sqbrac{\trace (\rho A_yA_{y'})-\trace (\rho A_y)\trace (\rho A_{y'})}
\end{align*}
with similar expressions for $E_\rho (\mscript ')$, $(\mscript ')^\sim$ and $\Delta _\rho (\mscript ')$. Moreover,
\begin{align*}
\trace\paren{\rho\sqbrac{\mscripttilde ,\mscripttilde\,'}}
   &=\sum  _{y,y'}\trace (\gamma _y\ptilde )\trace (\gamma '_{y'}\ptilde ')\brac{\trace\paren{\rho\sqbrac{A_y,B_{y'}}}}\\
   \rmcor _\rho (\mscript ,\mscript ')&=\sum _{y,y'}\trace (\gamma _y\ptilde )\trace (\gamma '_{y'}\ptilde ')
   \sqbrac{\trace (\rho A_yB_{y'})-\trace (\rho A_y)\trace (\rho B_{y'})}\hfill\square
\end{align*}
\end{exam}

\begin{exam}{6}  
As in Example~2 and 4, let $\nu (\rho )=\sum\limits _y\alpha _y(\rho )\otimes\gamma _y$, $\nu '(\rho )=\sum\limits _{y'}\alpha '_{y'}(\rho )\otimes\gamma '_{y'}$ be separable channels with $\alpha _y(\rho )=\trace (\rho A_y)\beta _y$, $\alpha '_{y'}(\rho )=\trace (\rho B_{y'})\beta '_y$ being Holevo instruments and let
$\mscript =(K,\nu ,P)$, $\mscript '=(K',\nu ',P')$. Even though the instruments are different than those in Example~5, all the statistical expressions are the same. As a special case, let $\alpha _y=\lambda _y\beta$ be a trivial instrument. Then $\alphahat _y=\lambda _yI$ is an identity observable. We then have
\begin{align*}
E_\rho (\mscript )&=\sum _y\lambda _y\trace (\gamma _y\ptilde )=\trace (\gamma\ptilde )\\
\mscripttilde&=\sum _y\trace (\gamma _y\ptilde )\lambda _yI=\trace (\gamma\ptilde )I
\end{align*}
Since
\begin{equation*}
\Delta _\rho (\mscript )=\trace\paren{\rho\sqbrac{\mscripttilde ,\mscripttilde '}}=\rmcor _\rho '(\mscript ,\mscript ')=\Delta _\rho (\mscript ,\mscript ')=0
\end{equation*}
the uncertainty principle vanishes trivially\hskip 10pc\qedsymbol
\end{exam}

\end{document}